\newtheoremstyle{mystyle}
{}
{}
{\itshape}
{}
{\bfseries}
{.}
{ }
{\thmname{#1}\thmnumber{ #2}\thmnote{ (#3)}}%
\theoremstyle{mystyle}
\newtheorem{observation}{Observation}
\newtheorem{definition}{Definition}
\newtheorem{theorem}{Theorem}
\title{Burning graphs through farthest-first traversal
}
\author[a,b]{\small Jesús García Díaz}
\author[a,b]{\small Julio César Pérez Sansalvador}
\author[a,b]{\small Lil María Xibai Rodríguez Henríquez}
\author[b]{\small José Alejandro Cornejo Acosta}
\affil[a]{\footnotesize Consejo Nacional de Ciencia y Tecnología, Mexico city 03940, Mexico}
\affil[b]{\footnotesize Instituto Nacional de Astrofísica, Óptica y Electrónica, Coordinación de Ciencias Computacionales, Puebla 72840, Mexico}
\begin{document}
\maketitle

\begin{abstract}
The graph burning problem is an NP-hard combinatorial optimization problem that helps quantify the vulnerability of a graph to contagion. This paper introduces a simple farthest-first traversal-based approximation algorithm for this problem over general graphs. We refer to this proposal as the Burning Farthest-First (BFF) algorithm. BFF runs in $O(n^3)$ steps and has an approximation factor of $3-2/b(G)$, where $b(G)$ is the size of an optimal solution. Despite its simplicity, BFF tends to generate near-optimal solutions when tested over some benchmark datasets; in fact, it returns similar solutions to those returned by much more elaborated heuristics from the literature.
\end{abstract}

\keywords{Approximation algorithms \and farthest-first traversal \and graph burning \and information spreading \and social contagion.}

\section{Introduction}
\label{sec:introduction}
The graph burning problem is an NP-hard combinatorial optimization problem introduced in 2014 as a contagion model in social networks \cite{bonato2014burning}. However, it also models other phenomena, such as the sequential spread of information on general networks and the spread of viral infections under a very idealistic context \cite{vsimon2019heuristics,gupta2021burning,gautam2021faster}. The main attribute of this problem is that it helps quantify how vulnerable a graph is to contagion. This problem's input is a simple graph $G=(V,E)$, and its goal is to find a minimum length sequence of vertices $(y_1,y_2,...,y_k)$ such that, by repeating the following steps from $i=1$ to $k$, all vertices in $V$ get burned \cite{bonato2014burning,bessy2017burning,bonato2019approximation}. In the beginning, all vertices are unburned.

\begin{enumerate}[label=\alph*)]
	\item The neighbors of the previously burned vertices get burned.
	\item Vertex $y_i$ gets burned.
\end{enumerate}

Every time a vertex gets burned, it remains in that state. A \textit{burning sequence} is a sequence of vertices that burns all the input graph vertices by repeating the previous steps. An \textit{optimal burning sequence} is a burning sequence of minimum length. Thus, the graph burning problem seeks an optimal burning sequence. Figure~\ref{fig:0} shows the optimal burning sequence of a small graph. This figure emphasizes an important observation exploited by the graph burning problem: \textit{fire} can be spread from different \textit{places} at different \textit{times}. Also, the fire that started earlier spreads further. Specifically, at step 1a, the set of burned vertices is the empty set; therefore, no vertex gets burned. At step 1b, vertex $v_7$ gets burned. Afterwards, at step 2a, the neighbors $\{v_2,v_3,v_{10},v_{11}\}$ of the previously burned vertices $\{v_7\}$ get burned too. At step 2b, vertex $v_8$ gets burned. Finally, at step 3a, the neighbors $\{v_1,v_4,v_5,v_6,v_9,v_{12},v_{13},v_{14}\}$ of the previously burned vertices $\{v_2,v_3,v_7,v_{10},v_{11},v_8\}$ get burned too, and at step 3b, vertex $v_{15}$ gets burned. At this point, all vertices are burned. Thus, the sequence $(v_7,v_8,v_{15})$ is a burning sequence. Intuitively, a network with a smaller burning sequence is more vulnerable to get quickly burned. Identifying the elements of an optimal burning sequence may be helpful to encourage or avoid their burning, depending on if spreading the \textit{fire} is desirable or not. For example, for social contagion in social networks, it may be desirable to quickly spread particular messages to users. However, it may also be desirable to prevent fake news from spreading \cite{bonato2014burning}.

Due to its nature, the graph burning problem has been approached chiefly through approximation algorithms and heuristics. These heuristics are mainly based on different centrality measures; through experimentation, they have shown to be relatively efficient \cite{vsimon2019heuristics,gautam2021faster}. Regarding approximation algorithms, there are two 3-approximation algorithms for general graphs \cite{bessy2017burning,bonato2019approximation}, a 2-approximation algorithm for trees \cite{bonato2019approximation}, a 1.5-approximation algorithm for graphs with disjoint paths \cite{bonato2019approximation}, and a 2-approximation algorithm for square grids \cite{gupta2021burning}. This paper introduces two $3-2/b(G)$-approximation algorithms (Algorithms~\ref{alg:1} and \ref{alg:2}) for the graph burning problem over general graphs, where $b(G) \in \mathbb{Z}^+$ is the size of an optimal solution. The purpose of Algorithm~\ref{alg:1} is to smooth the way for understanding Algorithm~\ref{alg:2}. While Algorithm~\ref{alg:1} requires the optimal burning sequence length in advance, making it impractical, Algorithm~\ref{alg:2} does not have this requirement. Therefore, the latter is the main contribution of this paper. We refer to this algorithm as Burning Farthest-First (BFF), and its main attributes include conceptual simplicity and competitive performance.

\begin{figure}[]
	\centering
	\includegraphics[scale=0.47]{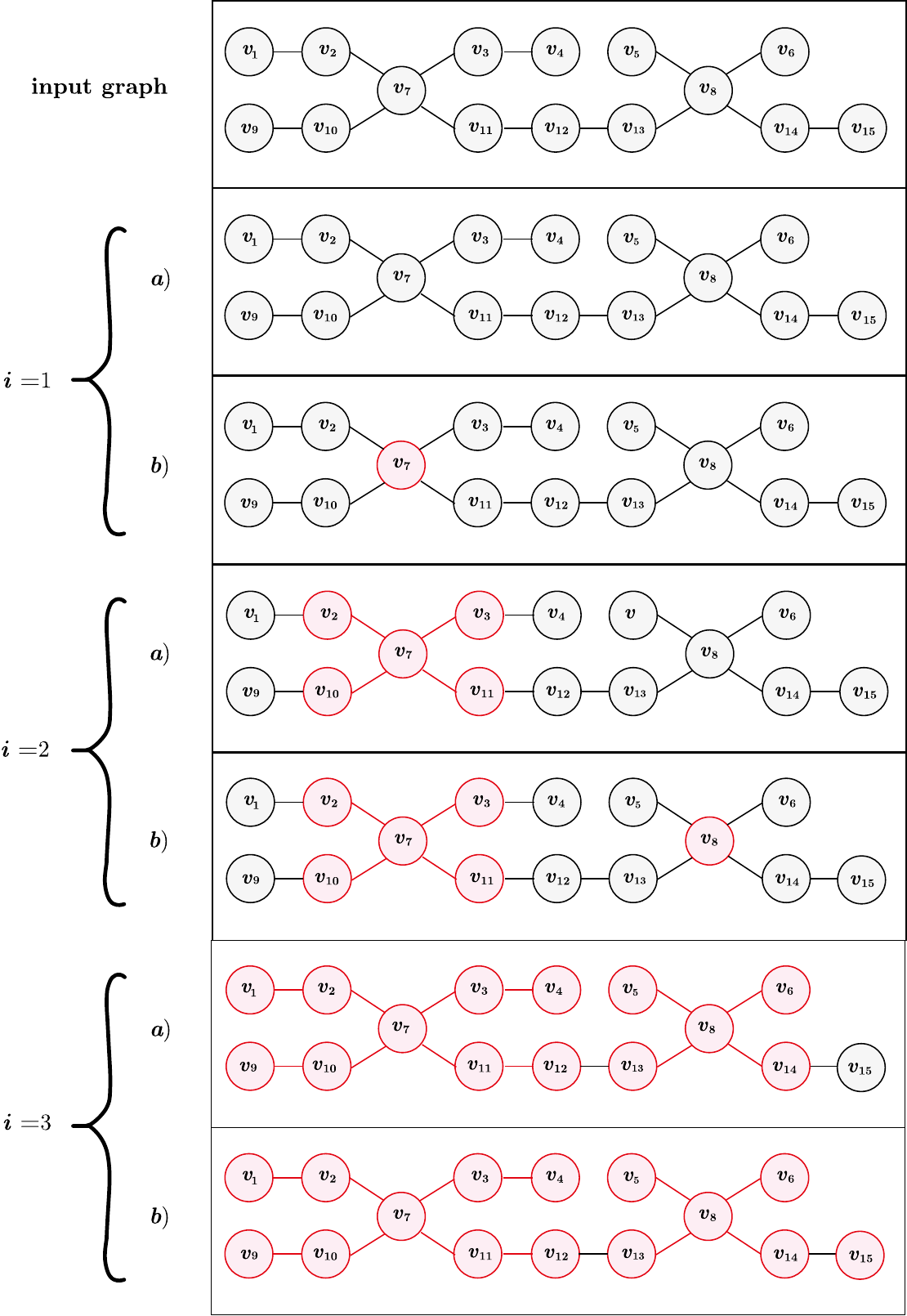}
	\caption{The optimal burning sequence of the given input graph is $(v_7,v_8,v_{15})$}
	\label{fig:0}
\end{figure}

Regarding the 3-approximation algorithms from the literature \cite{bessy2017burning,bonato2019approximation}, it is worth mentioning the following. Although a more detailed analysis may show that they return solutions of length at most $3 \cdot b(G)-\beta$, their authors did not report the exact value of $\beta$. Besides, even though the paper introducing Bonato and Kamali's approximation algorithm \cite{bonato2019approximation} mentions the values $3g-2$ and $3g-3$, none of them is necessarily the length of the returned solution. This is because their algorithm does not necessarily returns a burning sequence when $g=b(G)$ (See the proof of Lemma 2 from \cite{bonato2019approximation}: \textit{``Burn-Guess($G$,$g$) returns Bad-Guess if the number of centers becomes equal to $g$.''} So, if the input is a disconnected graph $G$ with $g$ components and $b(G)=g$, Burn-Guess($G$,$g$) will return Bad-Guess because it selects at least one vertex from each component.) In the analysis performed in this paper, we computed the exact value of $\beta$ for the BFF algorithm. Therefore, the analysis presented in this paper improves the analysis performed in previous works.

The remaining part of the document is organized as follows. Section \ref{sec:1} gives some background definitions, including the formal definition of the problem. Section \ref{sec:2} introduces the proposed approximation algorithms. This same section presents a running time analysis, a collection of tight examples, and a brief empirical evaluation. Finally, Section \ref{sec:con} presents the concluding remarks.

\section{Graph burning problem}
\label{sec:1}

Some context definitions are provided before presenting the formal definition of the graph burning problem. All of these refer to a simple graph $G=(V,E)$.

\begin{definition}
	The distance $d(u,v)$ between two vertices $u,v \in V$ is defined as the number of edges in the shortest path between them.
	\label{def:0}
\end{definition}

\begin{definition}
	The distance $d(u,S)$ between a vertex $u \in V$ and a set $S \subseteq V$ is defined as the distance from vertex $u$ to its nearest vertex in $S$. Namely, $d(u,S)=\min_{v \in S} d(u,v)$.
	\label{def:1}
\end{definition}

\begin{definition}
	The open neighborhood $N(v)$ of a vertex $v \in V$ is the set of vertices at distance one from $v$. Notice that $v \not \in N(v)$.
	\label{def:2}
\end{definition}

\begin{definition}
	The closed neighborhood $N[v]$ of a vertex $v \in V$ is the set of vertices at distance at most one from $v$. In other words, $N[v] = N(v) \cup \{v\}$.
	\label{def:3}
\end{definition}

\begin{definition}
	The closed $k^{th}$ neighborhood $N_{k}[v]$ of a vertex $v \in V$ is the set of vertices at distance at most $k$ from $v$. Notice that $N_0[v] = \{v\}$ and $N_1[v] = N[v]$.
	\label{def:4}
\end{definition}

\begin{definition}
	A finite sequence is a function $f:\{x \in \mathbb{Z}^+ \ | \ x \le k\} \rightarrow S$, where $k \in \mathbb{Z}^+$ is the length of the sequence, $S$ is the set of objects in the sequence, and $|S|\le k$. Notice that $f(i)$ is the object in the $i^{th}$ position of the sequence. Besides, each position is assigned to only one object, but one same object may be at many positions, i.e., $f$ is surjective but not necessarily injective.
\end{definition}

From now on, the word \textit{burn} is mostly replaced by the word \textit{cover}. We say that a vertex $v$ in a given burning sequence is responsible for \textit{covering} all the vertices in its closed $j^{th}$ neighborhood $N_{j}[v]$, where $j$ depends on the sequence's length and the position of vertex $v$. This value $j$ is referred to as the \textit{covering radius} of vertex $v$. As an instance, vertex $v_7$ of Figure~\ref{fig:0} covers all vertices in $N_{2}[v_7]=\{v_1,v_2,v_3,v_4,v_7,v_9,v_{10},v_{11},v_{12}\}$, vertex $v_8$ covers all vertices in $N_{1}[v_8]=\{v_5,v_6,v_8,v_{13},v_{14}\}$, and vertex $v_{15}$ covers all vertices in $N_{0}[v_{15}]=\{v_{15}\}$. Definitions \ref{def:5} and \ref{def:6} below formally describe the graph burning problem.

\begin{definition}
	The graph burning problem seeks a minimum length sequence of vertices $f^*: \{x \in \mathbb{Z}^+ \ | \ x \le b(G)\} \rightarrow S^*$ covering the whole set $V$ (Eq.~\ref{eq:cov:1}), where $b(G)$ (known as the \textit{burning number}) is its length, $S^* \subseteq V$, $|S^*|\le b(G)$, $f^{*-1}(v)$ is the set of positions that vertex $v$ has in the sequence, and $b(G)-\min f^{*-1}(v)$ is the bigger covering radius associated with vertex $v$.
	\begin{equation}
		\bigcup_{v \in S^*} N_{b(G) - \min f^{*-1}(v)}[v] = V
		\label{eq:cov:1}
	\end{equation}
	\label{def:5}
\end{definition}

By Definition \ref{def:5}, a burning sequence may have repeated vertices. The usual definition of the graph burning problem states that each vertex in a burning sequence is burned the first time by itself, which implies that all vertices in a burning sequence must be different \cite{bessy2017burning,bonato2019approximation,vsimon2019heuristics}. However, a burning sequence of optimal length may have repeated vertices (See Observation~\ref{obs:1} and Figure~\ref{fig:1}.) This observation will be important in the next section.

\begin{observation}
	A repeated vertex $v$ in a burning sequence $f:\{x\in \mathbb{Z}^+ \ | \ x\le k\} \rightarrow S$ has $|f^{-1}(v)|$ different covering radii. From these radii, the bigger one corresponds to $\min f^{-1}(v)$. Therefore, the presence of $v$ in positions $f^{-1}(v) \setminus \{\min f^{-1}(v)\}$ do not contribute to burning vertices; instead, it increases the covering radius of the vertices $u$ such that $\min f^{-1}(u) < \max f^{-1}(v)$. Thereby, even though a burning sequence with no repeated vertices may be better than a burning sequence with repeated vertices, the latter may exist; such as the ones in Figure~\ref{fig:1}.
	\label{obs:1}
\end{observation}

\begin{figure}[h]
	\centering
	\includegraphics[scale=0.5]{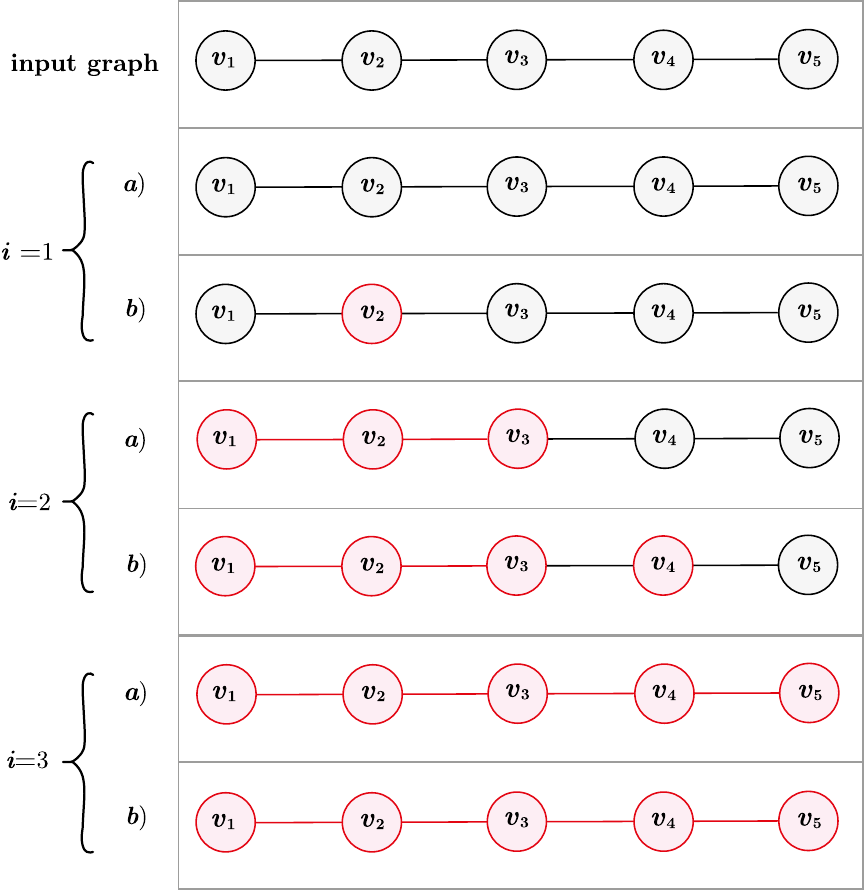}
	\caption{Vertices may get burned in the same fashion by different optimal burning sequences: $(v_2,v_4,v_1)$, $(v_2,v_4,v_2)$, $(v_2,v_4,v_3)$, $(v_2,v_4,v_4)$, and $(v_2,v_4,v_5)$; some of which have repeated vertices.}
	\label{fig:1}
\end{figure}

The following observation is used to rephrase the graph burning problem as a covering problem that receives as an input a complete weighted graph that follows a metric.

\begin{observation}
	
	The complete weighted graph \mbox{$G_w=(V,E_w)$} that results from computing the shortest path between every pair of vertices of a simple graph $G=(V,E)$ respects a metric. Namely,
	
	\begin{itemize}
		\item $\forall v \in V \ , \ d(v,v)=0$ ,
		\item $\forall u,v \in V \ , \ d(u,v)=d(v,u)$ ,
		\item $\forall u,v,w \in V \ , \ d(u,v) \le d(u,w) + d(w,v)$ ,
	\end{itemize}

	- where $d(u,v)$ is the weight of edge $\{u,v\} \in E_w$.
	
	Now, suppose $G$ is disconnected. In that case, the undefined distance between two vertices in different connected components can be substituted by $n$, which is greater than the maximum distance between any two connected vertices in $G$, where $n=|V|$. This way, the resulting weighted graph is completed and follows a metric too. Once the complete weighted graph $G_w=(V,E_w)$ is computed, the graph burning problem can be rephrased in the following way.
	
	\label{obs:2}
\end{observation}

\begin{definition}
	Given the complete weighted graph $G_w=(V,E_w)$ from Observation \ref{obs:2}, the graph burning problem seeks a minimum length sequence of vertices $f^*: \{x \in \mathbb{Z}^+ \ | \ x \le b(G)\} \rightarrow S^*$ covering the whole set $V$ (Eq.~\ref{eq:cov:2}), where $b(G)$ is its length, $S^* \subseteq V$, $|S^*|\le b(G)$, $f^{*-1}(v)$ is the set of positions that vertex $v$ has in the sequence, and $b(G)-\min f^{*-1}(v)$ is the bigger covering radius associated with vertex $v$.
	
	\begin{equation}
		\bigcup_{v \in S^*} \{ u \in V \ | \ d(u,v) \le b(G) - \min f^{*-1}(v) \}  = V
		\label{eq:cov:2}
	\end{equation}
	
	Observe that $b(G) \le n$, where $n=|V|$. Therefore, the covering radius of all vertices in $f^*$ cannot be greater than $n-1$. In consequence, the pairs of vertices at distance $n$ in $G_w$ (which are disconnected in the original input graph $G$) cannot burn each other. Thus, this definition is consistent with Definition \ref{def:5}.
	\label{def:6}
\end{definition}

Based on these definitions and observations, the following section introduces two approximation algorithms for the graph burning problem.

\section{Proposed approximation algorithms}
\label{sec:2}

This section introduces Algorithms~\ref{alg:1} and~\ref{alg:2}. Both algorithms run in $O(n^3)$ steps and have an approximation factor of $3-2/b(G)$, where $b(G)$ is the burning number of the input graph $G=(V,E)$. The goal of Algorithm~\ref{alg:1} is to smooth the way for understanding Algorithm~\ref{alg:2}, which we refer to as Burning Farthest-First (BFF). While Algorithm~\ref{alg:1} has the disadvantage of requiring the burning number $b(G)$ in advance, Algorithm~\ref{alg:2} does not. In both algorithms, the distance between every pair of vertices must be known in advance (See Observation~\ref{obs:2} and Definition~\ref{def:6}.) These distances may be computed by a polynomial-time shortest path algorithm, such as a Breadth-First Search-based algorithm (line 1) \cite{kleinberg2006algorithm}.

In summary, Algorithm~\ref{alg:1} begins by iteratively constructing a sequence $f:\{x\in \mathbb{Z}^+ \ | \ x\le 3\cdot b(G)-2\} \rightarrow S$ of exactly length $3\cdot b(G)-2$, where vertex $f(1)$ is selected at random and each other vertex $f(j)$, $1<j\le b(G)$, is the farthest one from the previously selected ones (lines 5 to 9). The rest of the vertices are selected at random (lines 10 to 14). The procedure for selecting the first $b(G)$ vertices has many applications and is known under different names, including farthest-first traversal, farthest point heuristic, and Gon algorithm \cite{rosenkrantz1977analysis,gonzalez1985clustering,dyer1985simple,xiang1997color,moenning2003new,mebarki2005farthest,huang2012farthest,garcia2017worse,garcia2019approximation,sheehy2020one}. The last $2\cdot b(G)-2$ random vertices have the effect of increasing the covering radius of the first $b(G)$ vertices. This way, each of the first $b(G)$ vertices in $f$ becomes responsible for covering all the vertices covered by some unique vertex in the optimal burning sequence $f^*:\{x\in \mathbb{Z}^+ \ | \ x\le b(G)\} \rightarrow S^*$. Therefore, $f$ covers all vertices in $V$, and it has a length of exactly $3\cdot b(G)-2$. Theorem~\ref{theo:1} shows the correctness of Algorithm~\ref{alg:1}.

Algorithm~\ref{alg:2} is similar to Algorithm~\ref{alg:1}. However, it has the advantage of not requiring the burning number $b(G)$ in advance. In summary, Algorithm~\ref{alg:2} constructs a sequence of vertices $f:\{x\in \mathbb{Z}^+ \ | \ x\le |S|\} \rightarrow S$ of length at most $3\cdot b(G)-2$, where vertex $f(1)$ is selected at random, and each other vertex $f(j)$, $1<j\le |S|$, is the farthest from the previously selected ones (lines 8 to 16). The difference between these algorithms is that Algorithm~\ref{alg:2} keeps adding the farthest vertex until all vertices in $V$ are covered (line 8). By Theorem~\ref{theo:1}, no more than $3\cdot b(G)-2$ vertices added this way are needed to cover all the vertices in $V$. Therefore, Algorithm~\ref{alg:2} returns a burning sequence of length at most $3\cdot b(G)-2$. Notice that this algorithm can return a burning sequence of length less than $3\cdot b(G)-2$, while Algorithm~\ref{alg:1} always returns a burning sequence of exactly length $3\cdot b(G)-2$. Theorem~\ref{theo:2} shows the correctness of Algorithm~\ref{alg:2}.

\begin{algorithm}[h]
	 \hspace*{\algorithmicindent} \textbf{Input:} A simple graph $G=(V,E)$ and $b(G)$ \\
	\hspace*{\algorithmicindent} \textbf{Output:} A burning sequence \\ 
	\hspace*{1.7cm}$f: \{x \in \mathbb{Z}^+ \ | \ x \le 3 \cdot b(G) - 2\} \rightarrow S$, \\ 
	\hspace*{1.7cm}\mbox{where $S\subseteq V$, and $|S|\le 3\cdot b(G)-2$}
\begin{algorithmic}[1]
	\STATE Compute all-pairs shortest path
	\STATE $v_1 \leftarrow $ any vertex in $V$
	\STATE $S \leftarrow \{v_1\}$
	\STATE $f \leftarrow (1,v_1)$
	\FOR{$i=2$ \textbf{to} $b(G)$}
		\STATE $v_i \leftarrow \arg \max_{u \in V} d(u,S)$
		\STATE $S \leftarrow S \cup \{v_i\}$
		\STATE $f \leftarrow f \cup (i,v_i)$
	\ENDFOR
	\FOR{$i=b(G)+1$ \textbf{to} $3\cdot b(G)-2$}
		\STATE $v_i \leftarrow \text{any vertex in } V$
		\STATE $S \leftarrow S \cup \{v_i\}$
		\STATE $f \leftarrow f \cup (i,v_i)$
	\ENDFOR
	\RETURN $f$ \;
\end{algorithmic}
	\caption{A $3-2/b(G)$-approximation algorithm for the graph burning problem with known burning number $b(G)$}
\label{alg:1}
\end{algorithm}

\begin{algorithm}[h]
	\hspace*{\algorithmicindent} \textbf{Input:} A simple graph $G=(V,E)$\\
	\hspace*{\algorithmicindent} \textbf{Output:} A burning sequence\\
	\hspace*{1.7cm}$f:\{x \in \mathbb{Z}^+ \ | \ x \le |S|\} \rightarrow S$,\\
	\hspace*{1.7cm}\mbox{where $S\subseteq V$, and $|S|\le 3\cdot b(G)-2$}
	\begin{algorithmic}[1]
		\STATE Compute all-pairs shortest path
		\STATE $v_1 \leftarrow $ any vertex in $V$
		\STATE $S \leftarrow \{v_1\}$
		\STATE $f \leftarrow (1,v_1)$
		\STATE $B_p \leftarrow \emptyset$
		\STATE $B_c \leftarrow \{v_1\}$
		\STATE $i \leftarrow 2$
		\WHILE{$B_c \not = V$}
			\STATE $v_i \leftarrow \arg \max_{u \in V} d(u,S)$
			\STATE $S \leftarrow S \cup \{v_i\}$
			\STATE $f \leftarrow f \cup (i,v_i)$
			\STATE $T \leftarrow N(B_c \setminus B_p)$
			\STATE $B_p \leftarrow B_c$
			\STATE $B_c \leftarrow B_c \cup T \cup \{v_i\}$
			\STATE $i \leftarrow i+1$
		\ENDWHILE
		\RETURN $f$
	\end{algorithmic}
	\caption{BFF}
	\label{alg:2}
\end{algorithm}

Before introducing Theorems~\ref{theo:1} and \ref{theo:2}, we introduce another observation. As before, this refer to a simple input graph $G=(V,E)$.

\begin{observation}
	Given an optimal burning sequence \mbox{$f^*:\{x \in \mathbb{Z}^+ \ | \ x \le b(G)\} \rightarrow S^*$}, the distance from every vertex $v \in V$ to the set $S^*$ is less than or equal to $b(G)-1$ (Eq.~\ref{eq:2} and \ref{eq:3}). Otherwise, $f^*$ would not be an optimal burning sequence.
	\label{obs:222}
\end{observation}

\begin{equation}
	\forall v \in V, \hspace{0.5cm} d(v,S^*) \le b(G)-1
	\label{eq:2}
\end{equation}

\begin{equation}
	\text{where} \hspace{0.4cm} d(v,S^*) = \min_{u \in S^*} d(v,u)
	\label{eq:3}
\end{equation}

\begin{theorem}
	Algorithm~\ref{alg:1} returns a burning sequence \mbox{$f:\{x \in \mathbb{Z}^+ \ | \ x \le 3\cdot b(G)-2\} \rightarrow S$}, where $S\subseteq V$ and $|S| \le 3\cdot b(G)-2$. Namely, Algorithm~\ref{alg:1} is a $3-2/b(G)$-approximation algorithm for the graph burning problem.
	\label{theo:1}
\end{theorem}

\begin{proof}
	First of all, observe the following facts. The length of the sequence $f$ returned by Algorithm~\ref{alg:1} is exactly $3\cdot b(G)-2$. Each of the first $b(G)$ vertices $f(i)$ is the farthest from the previously added vertices (lines 2 to 9). Namely, for every $i \in \{1,2,...,b(G)\}$, $d(f(i),\{f(1),f(2),...,f(i-1)\})$ is maximal. Notice that the first vertex is selected at random (line 2). The remaining $2 \cdot b(G) - 2$ vertices are added at random, too (lines 10 to 14).
	
	The general structure of the proof is the following. We show that each of the first $b(G)$ vertices added to the sequence covers all the vertices that some vertex $u^*_i \in S^*$ at a distance not greater than $b(G)-1$ covers, where $f^*:\{x \in \mathbb{Z}^+ \ | \ x \le b(G) \} \rightarrow S^*$ is an optimal burning sequence. Then, we show that each of these vertices $u^*_i$ is unique for each vertex $f(i)$. This way, the first $b(G)$ vertices in $f$ cover all vertices.
	
	First, by Observation~\ref{obs:222}, there is a vertex $u^*_i \in S^*$ at a distance not greater than $b(G)-1$ from any vertex $f(i)$. 
	
	\begin{equation}
		\exists u^*_i \in S^* \ , \  d(f(i),u^*_i) \le b(G)-1
		\label{eq:4}
	\end{equation}
	
	Now, the covering radius of any vertex $u^*_i$ is equal to $b(G)-\min f^{*-1}(u^*_i)$. However, its existence is all we know about $u^*_i$; thus, its position $\min f^{*-1}(u^*_i)$ is unknown. For this reason, let us assume the worst scenario, where $\min f^{*-1}(u^*_i)=1$. Under this assumption, vertex $u^*_i$ covers all the vertices in $N_{b(G)-1}[u^*_i]$. Since the edges of the complete graph of distances respect the triangle inequality (Observation~\ref{obs:2}), the distance from every vertex $v \in N_{b(G)-1}[u^*_i]$ to vertex $f(i)$ is less than or equal to $2\cdot (b(G)-1)$.

	\begin{equation}
	\begin{split}
		\forall v\in N_{b(G)-1}[u^*_i] \ , \ d(v,f(i)) & \le d(v,u^*_i) + d(u^*_i,f(i)) \\
		& \le 2\cdot (b(G)-1)
	\end{split}
	\label{eq:5}
\end{equation}
	
	Additionally, since the length of $f$ is $3\cdot b(G)-2$, the first $b(G)$ vertices have a covering radius of $3\cdot b(G)-2-i$, which is greater than or equal to $2\cdot (b(G)-1)$.
	
	\begin{equation}
		\forall i \in \{1,2,...,b(G)\} \ , \
		3\cdot b(G)-2-i \ge 2\cdot (b(G)-1)
		\label{eq:6}
	\end{equation}
	
	To this point, we have shown that the first $b(G)$ vertices $f(i)$ cover all the vertices that are covered by some vertex $u^*_i \in S^*$. To complete the proof, we have to show that there is a unique vertex $u^*_i$ for each vertex $f(i)$. By Observation~\ref{obs:222}, vertex $f(i)$ can cover some vertex $u^*_i \in S^*$ at a distance at most $b(G)-1$. Let us assume for a moment that vertex $u^*_i$ equals some vertex $u^*_j$, where $j<i$. In this case, by the triangle inequality, the distance from vertex $f(i)$ to vertex $f(j)$ should be at most $2 \cdot (b(G)-1)$ (Eq.~\ref{eq:s}).
	
	\begin{equation}
		d(f(i),f(j)) \le d(f(i),u^*_i) + d(u^*_j,f(j)) \le 2\cdot (b(G)-1)
		\label{eq:s}
	\end{equation}
	
	Notice that, under this scenario, all vertices in $V$ are already covered by vertex $f(j)$. The reason is that $f(i)$ is the farthest vertex from set $\{f(1),f(2),...,f(i-1)\}$, and $f(j)$ has a covering radius of at least $2\cdot (b(G)-1)$, where $j<i$. In summary, under the scenario where $u^*_i=u^*_j$, all the vertices are covered, which means that $f$ is a burning sequence of length $3 \cdot b(G)-2$. Let us now explore the case where $u^*_i$ is different from every other vertex $u^*_j$, where $j<i<b(G)$. This case is straightforward because each of the first $b(G)$ vertices in $f$ can be associated to a unique vertex $u^*_i \in S^*$.
	
	Finally, $f$ is a valid burning sequence because it maps each of the $3 \cdot b(G)-2$ positions to one vertex, all vertices in the sequence have an assigned position, and it burns all the vertices in $V$. The approximation factor of this algorithm is $3-2/b(G)$ (Eq.~\ref{eq:7}).
	
	\begin{equation}
		\frac{3\cdot b(G)-2}{b(G)}=3-\frac{2}{b(G)}
		\label{eq:7}
	\end{equation}
	
\end{proof}

\begin{theorem}
	BFF (Algorithm~\ref{alg:2}) is a $3-2/b(G)$-approximation algorithm for the graph burning problem.
	\label{theo:2}
\end{theorem}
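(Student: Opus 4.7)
The plan is to reduce Theorem \ref{theo:2} to Theorem \ref{theo:1} by observing that the greedy farthest-first traversal performed in the first $b(G)$ iterations of the BGP algorithm is \emph{identical} to the traversal performed in lines 2--9 of Algorithm \ref{alg:1}. The only differences between the two algorithms are (i) Algorithm \ref{alg:1} pads with arbitrary vertices up to position $3\cdot b(G)-2$, whereas the BGP algorithm continues adding farthest vertices, and (ii) the BGP algorithm stops as soon as $B_c = V$. So the whole proof reduces to bounding the length $k$ of the sequence returned by Algorithm \ref{alg:2}.

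First I would introduce the notation: let $(S,f)$ be the sequence returned by the BGP algorithm, let $k=|S|$, and note that the set of vertices burned after iteration $i$ is exactly $\bigcup_{j=1}^{i} N_{i-j}[v_j]$, which is what the bookkeeping with $B_p$, $B_c$ and $E$ computes. Hence the while loop terminates at the first $i$ for which $\bigcup_{j=1}^{i} N_{i-j}[v_j] = V$.

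Next, I would argue by contradiction: suppose $k > 3\cdot b(G)-2$. Then in particular the algorithm did \emph{not} stop at iteration $i = 3\cdot b(G)-2$, meaning that $\bigcup_{j=1}^{3\cdot b(G)-2} N_{(3\cdot b(G)-2)-j}[v_j] \ne V$. But $v_1,\ldots,v_{b(G)}$ coincide with the sequence $(S',f')$ of Definition \ref{def:7}, because the first $b(G)$ iterations of Algorithm \ref{alg:2} execute the exact same farthest-first rule as Algorithm \ref{alg:1}. The argument inside the proof of Theorem \ref{theo:1} (pigeonhole assignment of a unique $u^*_i \in S^*$ to each $s'_i$, triangle inequality giving $d(v,s'_i)\le 2(b(G)-1)$ for every $v \in N_{b(G)-1}[u^*_i]$, and Eq. \ref{eq:6}) shows that giving each $s'_i$ a covering radius of at least $2(b(G)-1)$ is sufficient to cover all of $V$. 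In our setting those radii are $(3\cdot b(G)-2)-i$ for $i\in\{1,\ldots,b(G)\}$, all of which are at least $2(b(G)-1)$, so $\bigcup_{j=1}^{b(G)} N_{(3\cdot b(G)-2)-j}[v_j] = V$, contradicting the assumption. Therefore $k \le 3\cdot b(G)-2$, and the approximation ratio $k/b(G) \le 3-2/b(G)$ follows as in Eq. \ref{eq:7}.

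The main obstacle will be making precise that Theorem \ref{theo:1}'s coverage argument depends only on the first $b(G)$ vertices and on the \emph{length} of the final sequence (through the covering radii), and not on the particular identity of the extra vertices appended at positions $b(G)+1,\ldots,3\cdot b(G)-2$. Once this is isolated as a lemma (essentially the content of Eq. \ref{eq:4}--\ref{eq:6}), replacing the arbitrary padding of Algorithm \ref{alg:1} with the farthest-vertex padding of the BGP algorithm is immediate, since neither the covering radii of $v_1,\ldots,v_{b(G)}$ nor the inclusion of $V$ in their joint neighborhood is affected.
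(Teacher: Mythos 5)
Your proposal is correct and follows essentially the same route as the paper: it reduces Theorem \ref{theo:2} to Theorem \ref{theo:1} by noting that the first $b(G)$ greedy farthest-first selections of the BGP algorithm coincide with those of Algorithm \ref{alg:1}, that the coverage argument of Theorem \ref{theo:1} depends only on those vertices and on the total sequence length (not on the identity of the padding), and that the stopping condition therefore forces $|S|\le 3\cdot b(G)-2$. Your version is somewhat more carefully argued than the paper's (the explicit contradiction at iteration $3\cdot b(G)-2$ and the isolation of the padding-independence as a lemma), but it is the same proof in substance.
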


\begin{proof}
	By Theorem~\ref{theo:1}, a sequence $f$ of length $3\cdot b(G)-2$ such that the first $b(G)$ vertices $f(i)$ are the farthest from the set $\{f(1),f(2),...,f(i-1)\}$ is a burning sequence. Namely, it covers the whole set of vertices $V$. Since the BFF algorithm constructs sequence $f:\{x \in \mathbb{Z}^+ \ | \ x \le |S| \} \rightarrow S$ by adding the farthest vertex $f(i)$ from the previously added vertices $\{f(1),f(2),...,f(i-1)\}$ (lines 9-11), at some point, it may reach a maximum length of $3\cdot b(G)-2$ that covers the whole set of vertices $V$. Since the BFF algorithm's stop condition is to cover all the vertices (line 8), this algorithm will return a valid burning sequence of length at most \mbox{$3\cdot b(G)-2$}. By Eq.~\ref{eq:7}, the BFF algorithm is a $3-2/b(G)$-approximation algorithm.
\end{proof}

\subsection{Running time}

Algorithms~\ref{alg:1} and \ref{alg:2} require computing the distance between every pair of vertices. Thus, a polynomial-time all-pairs shortest path algorithm has to be executed at the beginning (line 1). This algorithm may be based on Breadth-First Search. This way, line 1 can be executed in $O(mn + n^2)$ steps, where $n=|V|$ and $m=|E|$. For dense graphs, this complexity is $O(n^3)$. Afterwards, Algorithm~\ref{alg:1} computes the farthest vertex up to $O(b(G))$ times. The farthest vertex can be computed in $O(n)$ steps by keeping track of the last computed distances. Thus, the overall complexity of Algorithm~\ref{alg:1} is $O(n^3) + O(b(G) \cdot n)$, which is $O(n^3)$.

Algorithm~\ref{alg:2} computes the farthest vertex up to $O(b(G))$ times too. However, unlike Algorithm~\ref{alg:1}, it requires checking if all vertices are already burned (lines 12 to 14). So, at each iteration, it explores the neighborhood of the vertices that have been burned at the current iteration (line 12). The BFF algorithm checks each vertex's neighborhood only once during the whole \textit{while} cycle by keeping track of the already explored vertices. This way, checking if all vertices are burning takes $O(n^2)$ steps for the whole cycle (lines 8 to 16), and the overall complexity of the BFF algorithm is $O(n^3) + O(b(G) \cdot n) + O(n^2)$, which is $O(n^3)$ too.

The following subsection introduces a family of graphs that shows that the approximation factor of BFF (Algorithm~\ref{alg:2}) is tight. Besides, these graphs show that it may be convenient to repeat the BFF algorithm $n$ times to get better solutions. By doing this, the overall complexity of this algorithm becomes $O(n^3) + O(b(G) \cdot n^2) + O(n^3)$, which remains being $O(n^3)$. We refer to this algorithm as the BFF+ algorithm.

\subsection{Tight examples}

Theorem~\ref{theo:2} shows that BFF has an approximation factor of $3-2/b(G)$. However, this is an upper bound. A tight example is needed to determine that a better upper bound cannot exist for this algorithm. Although one tight example suffices, we present two families of them. Figures~\ref{fig:3} and \ref{fig:5} show a family of disconnected ($H_k$) and connected ($J_k$) tight examples, respectively. For every $k \in \mathbb{Z}^+$, graphs $H_k$ and $J_k$ can be constructed. Each graph $H_k$ consists of $k-1$ disconnected singletons, and $2k-1$ paths of length $k-1$ joined at vertex $v_k$. Observe that, for any value of $k \in \mathbb{Z}^+ \setminus \{1\}$, $H_k$ is a disconnected graph with $k$ components. Therefore, its burning number $b(H_k)$ cannot be smaller than $k$, i.e., at least one vertex from each connected component must be in the optimal burning sequence. It is easy to observe that $(v_k,v_{k-1},...,v_2,v_1)$ is its optimal burning sequence. Thus, $b(H_k)=k$. If BFF is executed over $H_k$, it may generate the following sequence: $(v_1,v_2,...,v_{k-1},u_1,u_2,...,u_{2k-1})$, which has length $3k-2$. Naturally, all the singletons, $v_1$, $v_2$, up to $v_{k-1}$, must be in the sequence. Then, if $u_1$ is selected, all vertices $u_2$, $u_3$, up to $u_{2k-1}$ must be in the sequence too; otherwise, vertex $u_{2k-1}$ would not be burned. In more detail, for every $i \in \{1,2,...,2k-1\}$, $u_i$'s covering radius, $r(u_i)=(3k-2)-(k-1+i)=2k-1-i$, is at most equal to its distance to $u_{2k-1}$, $d(u_i,u_{2k-1})=2(k-1)=2k-2$ (See how Eq.~\ref{eq:a1} reduces to Eq.~\ref{eq:a3}.) Therefore, if vertex $u_{2k-1}$ were removed from the sequence, it would not be burned by any other vertex; thus, such sequence would not be a valid burning sequence. Since the sequence returned by BFF has length $3k-2$ and $k=b(H_k)$, the length of this sequence is $3\cdot b(H_k)-2$. Thus, the approximation factor of BFF is tight.

\begin{eqnarray}
	r(u_i) & \le & d(u_i, u_{2k-1}) \label{eq:a1} \\
	2k-1-i & \le & 2k-2 \label{eq:a2}\\
	0 & \le & i-1 \label{eq:a3}
\end{eqnarray}

\begin{figure}[h]%
	\centering
	\includegraphics[scale=0.42]{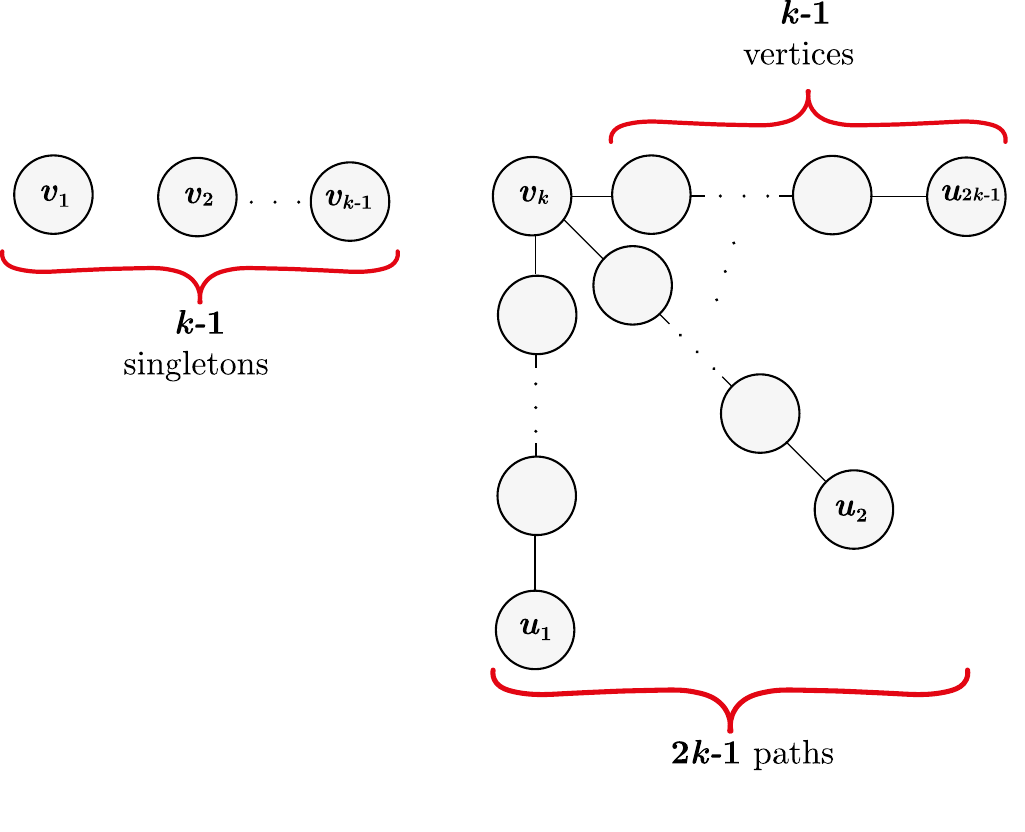}%
	\caption{A family of disconnected tight examples $H_k$ for the BFF algorithm.}
	\label{fig:3}
\end{figure}

\begin{figure}[h]%
	\centering
	\includegraphics[scale=0.42]{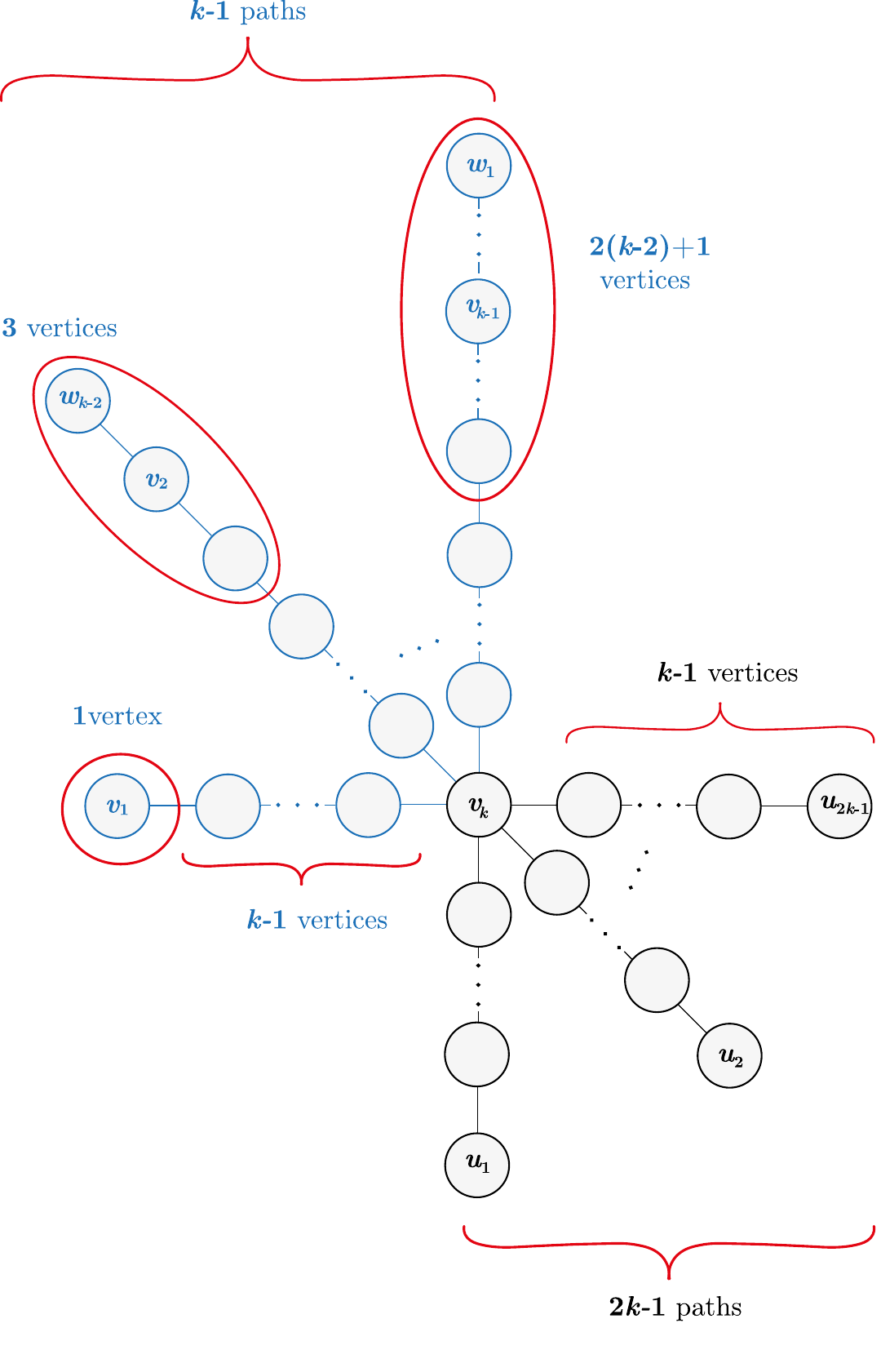}%
	\caption{A family of connected tight examples $J_k$ for the BFF algorithm.}
	\label{fig:5}
\end{figure}

For further clarification, Figure~\ref{fig:4} and Table~\ref{tab:1} show the distance from every vertex to each partial burning sequence generated by the BFF algorithm when executed over the tight example $H_3$. The optimal burning sequence of this graph is $(v_3,v_2,v_1)$. Notice that if two or more vertices are the farthest ones, any of them can be selected. In Table~\ref{tab:1}, the selected farthest vertices are in bold, and the burned vertices are marked in red. Although the distance between two vertices in different connected components is undefined, we considered such distances as maximum. Nevertheless, as mentioned in Observation~\ref{obs:2}, those undefined distances may also be replaced by a value of $n$. As expected, the length of the sequence returned by BFF is $3\cdot b(H_3)-2=3\cdot 3-2 = 7$.

%
\begin{table*}[h]
	\centering
	\caption{State of the BFF algorithm when executed over the tight example $H_3$ (See Figure~\ref{fig:4}.) The selected farthest vertices are in bold and the burned vertices are marked in red.}
	\label{tab:1}       
	\begin{tabular}{lccccccccccccc}
		\hline\noalign{\smallskip}
		& \multicolumn{13}{c}{$d(v,S)$}\\
		\cline{2-14}
		$f:\{x \in \mathbb{Z}^+  |  x \le |S|\} \rightarrow S$  & $v_1$ &  $v_2$ &  $v_3$ &  $v_4$ &  $v_5$ &  $v_6$ &  $v_7$ &  $v_8$ &  $v_9$ &   $v_{10}$ &   $v_{11}$ &  $v_{12}$ &   $v_{13}$\\
		\noalign{\smallskip}\hline\noalign{\smallskip}
		$( \ )$ & $\pmb{\infty}$ & $\infty$ & $\infty$ & $\infty$ & $\infty$ & $\infty$ & $\infty$ & $\infty$ & $\infty$ & $\infty$ & $\infty$ & $\infty$ & $\infty$\\
		
		$(v_1)$ & \textcolor{red}{$0$} & $\pmb{\infty}$ & $\infty$ & $\infty$ & $\infty$ & $\infty$ & $\infty$ & $\infty$ & $\infty$ & $\infty$ & $\infty$ & $\infty$ & $\infty$\\
		
		$(v_1,v_2)$ & \textcolor{red}{$0$} & \textcolor{red}{$0$} & $\infty$ & $\infty$ & $\infty$ & $\infty$ & $\pmb{\infty}$ & $\infty$ & $\infty$ & $\infty$ & $\infty$ & $\infty$ & $\infty$\\
		
		$(v_1,v_2,v_7)$ & \textcolor{red}{$0$} & \textcolor{red}{$0$} & $2$ & $3$ & $\pmb{4}$ & $1$ & \textcolor{red}{$0$} & $3$ & $4$ & $3$ & $4$ & $3$ & $4$\\
		
		$(v_1,v_2,v_7,v_5)$ & \textcolor{red}{$0$} & \textcolor{red}{$0$} & $2$ & $1$ & \textcolor{red}{$0$} & \textcolor{red}{$1$} & \textcolor{red}{$0$} & $3$ & $\pmb{4}$ & $3$ & $4$ & $3$ & $4$\\
		
		$(v_1,v_2,v_7,v_5,v_9)$ & \textcolor{red}{$0$} & \textcolor{red}{$0$} & \textcolor{red}{$2$} & \textcolor{red}{$1$} & \textcolor{red}{$0$} & \textcolor{red}{$1$} & \textcolor{red}{$0$} & $1$ & \textcolor{red}{$0$} & $3$ & $4$ & $3$ & $\pmb{4}$\\
		
		$(v_1,v_2,v_7,v_5,v_9,v_{13})$ & \textcolor{red}{$0$} & \textcolor{red}{$0$} & \textcolor{red}{$2$} & \textcolor{red}{$1$} & \textcolor{red}{$0$} & \textcolor{red}{$1$} & \textcolor{red}{$0$} & \textcolor{red}{$1$} & \textcolor{red}{$0$} & \textcolor{red}{$3$} & $\pmb{4}$ & \textcolor{red}{$1$} & \textcolor{red}{$0$}\\
		
		$(v_1,v_2,v_7,v_5,v_9,v_{13},v_{11})$ & \textcolor{red}{$0$} & \textcolor{red}{$0$} & \textcolor{red}{$2$} & \textcolor{red}{$1$} & \textcolor{red}{$0$} & \textcolor{red}{$1$} & \textcolor{red}{$0$} & \textcolor{red}{$1$} & \textcolor{red}{$0$} & \textcolor{red}{$1$} & \textcolor{red}{$0$} & \textcolor{red}{$1$} & \textcolor{red}{$0$}\\		
		\noalign{\smallskip}\hline
	\end{tabular}
\end{table*}

\begin{figure}[h]%
	\centering
	\includegraphics[scale=0.47]{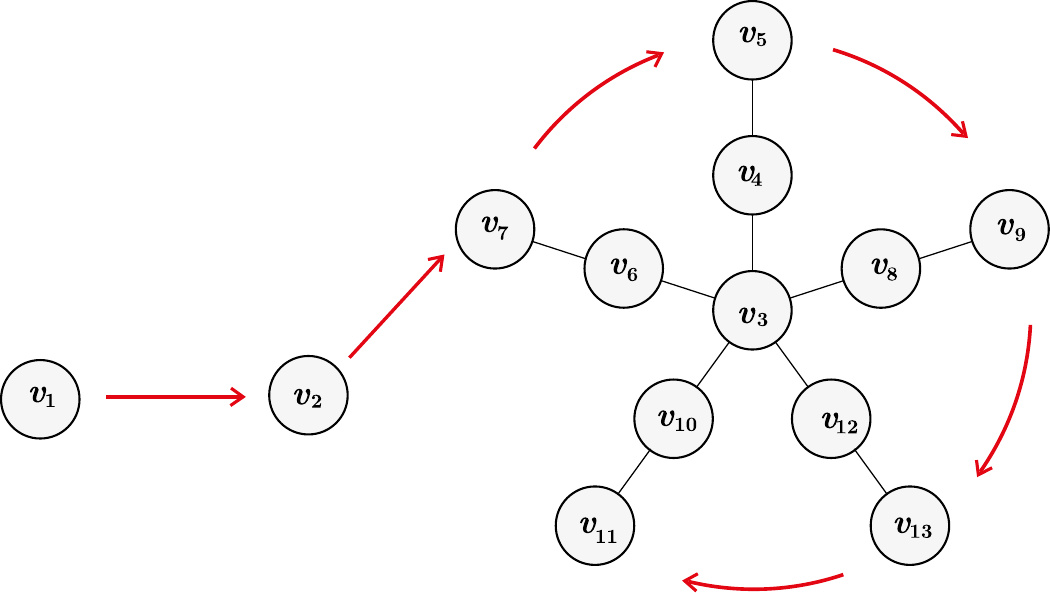}%
	\caption{Burning sequence returned by BFF over $H_3$.}
	\label{fig:4}
\end{figure}

Since a connected graph may be more interesting, Figure~\ref{fig:5} shows a family of connected tight examples $J_k$. Each graph $J_k$ consists of $2k-1$ paths of length $k-1$ joined at $v_k$ (black vertices) and $k-1$ extra paths joined at $v_k$ too (blue vertices); the length of each extra path is $(k-1)+j$, where $j$ is a positive odd number going from 1 to $2(k-2)+1$. It is not difficult to see that the optimal burning sequence of $J_k$ is $(v_k,v_{k-1},...,v_2,v_1)$. So, $b(J_k)=k$. If BFF is executed over $J_k$, it may generate the following sequence: $(w_1,w_2,...,w_{k-2},v_1,u_1,u_2,...,u_{2k-1})$, which has length $3k-2$ (Notice that $v_1$ may also be labeled as $w_{k-1}$.) All these vertices must be in the sequence; otherwise, vertex $u_{2k-1}$ would not be burned. In more detail, for every $i \in \{1,2,...,k-1\}$, $w_i$'s covering radius, $r(w_i)=3k-2-i$, is at most equal to its distance to $u_{2k-1}$, $d(w_i,u_{2k-1})=2(k-(i+1))+2(k-1)+1$ (See how Eq.~\ref{eq:b1} reduces to Eq.~\ref{eq:b3}; notice that $i<k$.) Using an argument similar to the one used for $H_k$, the covering radius $r(u_i)$ is at most $d(u_i,u_{2k-1})$ too. Therefore, if vertex $u_{2k-1}$ were removed from the sequence, it would not be burned by any other vertex; thus, such sequence would not be a valid burning sequence. Since the sequence returned by BFF has length $3k-2$ and $k=b(J_k)$, the length of this sequence is $3\cdot b(J_k)-2$.

\begin{eqnarray}
	r(w_i) & \le & d(w_i, u_{2k-1}) \label{eq:b1} \\
	3k-2-i & \le & 2(k-(i+1))+2(k-1)+1 \label{eq:b2}\\
	0 & \le & k-i-1 \label{eq:b3}
\end{eqnarray}

Figure~\ref{fig:6} and Table~\ref{tab:4} show the distance from every vertex to each partial burning sequence generated by the BFF algorithm when executed over the tight example $J_3$. The optimal burning sequence of this graph is $(v_6,v_2,v_7)$. In Table~\ref{tab:4}, the selected farthest vertices are in bold, and the burned vertices are marked in red. As expected, the length of the sequence returned by BFF is $3\cdot b(J_3)-2=3\cdot 3-2 = 7$.

%
\begin{table*}[h] 
	\centering
	\caption{State of the BFF algorithm when executed over the tight example $J_3$ (See Figure~\ref{fig:6}.) The selected farthest vertices are in bold and the burned vertices are marked in red.}
	\label{tab:4}
	\begin{tabular}{lp{0.22cm}p{0.22cm}p{0.22cm}p{0.22cm}p{0.22cm}p{0.22cm}p{0.22cm}p{0.22cm}p{0.22cm}p{0.22cm}p{0.22cm}p{0.22cm}p{0.22cm}p{0.22cm}p{0.22cm}p{0.22cm}p{0.22cm}p{0.22cm}p{0.22cm}}
		\hline\noalign{\smallskip}
		& \multicolumn{19}{c}{$d(v,S)$}\\
		\cline{2-20}
		$f:\{x \in \mathbb{Z}^+  |  x \le |S|\} \rightarrow S$  & $v_1$ &  $v_2$ &  $v_3$ &  $v_4$ &  $v_5$ &  $v_6$ &  $v_7$ &  $v_8$ &  $v_9$ &   $v_{10}$ &   $v_{11}$ &  $v_{12}$ &   $v_{13}$ & $v_{14}$ & $v_{15}$ & $v_{16}$ & $v_{17}$ & $v_{18}$ & $v_{19}$\\
		\noalign{\smallskip}\hline\noalign{\smallskip}
		
		$( \ )$ & $\pmb{\infty}$ & $\infty$ & $\infty$ & $\infty$ & $\infty$ & $\infty$ & $\infty$ & $\infty$ & $\infty$ & $\infty$ & $\infty$ & $\infty$ & $\infty$ & $\infty$ & $\infty$ & $\infty$ & $\infty$ & $\infty$ & $\infty$\\	
		
		$(v_1)$ & \textcolor{red}{$0$} & $1$ & $2$ & $3$ & $4$ & $5$ & $\pmb{8}$ & $7$ & $6$ & $6$ & $7$ & $6$ & $7$ & $6$ & $7$ & $6$ & $7$ & $6$ & $7$\\
		
		$(v_1,v_7)$ & \textcolor{red}{$0$} & \textcolor{red}{$1$} & $2$ & $3$ & $4$ & $3$ & \textcolor{red}{$0$} & $1$ & $2$ & $4$ & $\pmb{5}$ & $4$ & $5$ & $4$ & $5$ & $4$ & $5$ & $4$ & $5$\\
		
		$(v_1,v_7,v_{11})$ & \textcolor{red}{$0$} & \textcolor{red}{$1$} & \textcolor{red}{$2$} & $3$ & $3$ & $2$ & \textcolor{red}{$0$} & \textcolor{red}{$1$} & $2$ & $1$ & \textcolor{red}{$0$} & $3$ & $\pmb{4}$ & $3$ & $4$ & $3$ & $4$ & $3$ & $4$\\
		
		$(v_1,v_7,v_{11},v_{13})$ & \textcolor{red}{$0$} & \textcolor{red}{$1$} & \textcolor{red}{$2$} & \textcolor{red}{$3$} & $3$ & $2$ & \textcolor{red}{$0$} & \textcolor{red}{$1$} & \textcolor{red}{$2$} & \textcolor{red}{$1$} & \textcolor{red}{$0$} & $1$ & \textcolor{red}{$0$} & $3$ & $\pmb{4}$ & $3$ & $4$ & $3$ & $4$\\
		
		$(v_1,v_7,v_{11},v_{13},v_{15})$ & \textcolor{red}{$0$} & \textcolor{red}{$1$} & \textcolor{red}{$2$} & \textcolor{red}{$3$} & \textcolor{red}{$3$} & \textcolor{red}{$2$} & \textcolor{red}{$0$} & \textcolor{red}{$1$} & \textcolor{red}{$2$} & \textcolor{red}{$1$} & \textcolor{red}{$0$} & \textcolor{red}{$1$} & \textcolor{red}{$0$} & $1$ & \textcolor{red}{$0$} & $3$ & $\pmb{4}$ & $3$ & $4$\\
		
		$(v_1,v_7,v_{11},v_{13},v_{15},v_{17})$ & \textcolor{red}{$0$} & \textcolor{red}{$1$} & \textcolor{red}{$2$} & \textcolor{red}{$3$} & \textcolor{red}{$3$} & \textcolor{red}{$2$} & \textcolor{red}{$0$} & \textcolor{red}{$1$} & \textcolor{red}{$2$} & \textcolor{red}{$1$} & \textcolor{red}{$0$} & \textcolor{red}{$1$} & \textcolor{red}{$0$} & \textcolor{red}{$1$} & \textcolor{red}{$0$} & \textcolor{red}{$1$} & \textcolor{red}{$0$} & \textcolor{red}{$3$} & $\pmb{4}$\\
		
		$(v_1,v_7,v_{11},v_{13},v_{15},v_{17},v_{19})$ & \textcolor{red}{$0$} & \textcolor{red}{$1$} & \textcolor{red}{$2$} & \textcolor{red}{$3$} & \textcolor{red}{$3$} & \textcolor{red}{$2$} & \textcolor{red}{$0$} & \textcolor{red}{$1$} & \textcolor{red}{$2$} & \textcolor{red}{$1$} & \textcolor{red}{$0$} & \textcolor{red}{$1$} & \textcolor{red}{$0$} & \textcolor{red}{$1$} & \textcolor{red}{$0$} & \textcolor{red}{$1$} & \textcolor{red}{$0$} & \textcolor{red}{$1$} & \textcolor{red}{$0$}\\		
		
		\noalign{\smallskip}\hline
	\end{tabular}
\end{table*}

\begin{figure}[h]
	\centering
	\includegraphics[scale=0.47]{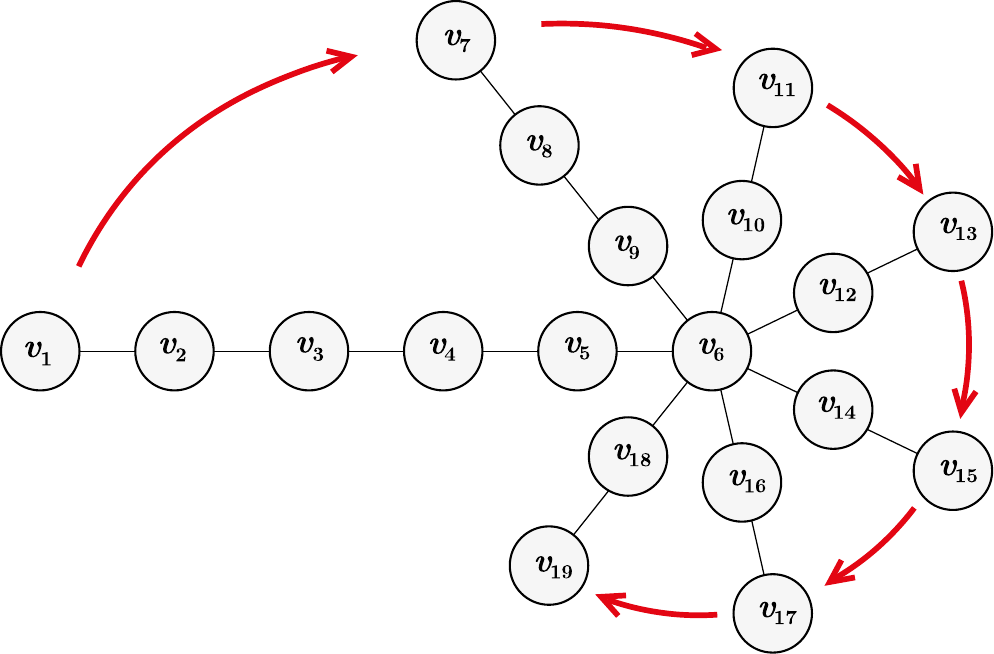}%
	\caption{Burning sequence returned by BFF over $J_3$.}
	\label{fig:6}
\end{figure}


The families of tight examples $H_k$ and $J_k$ show how selecting a bad initial vertex may lead to a bad burning sequence. Therefore, it could be helpful to repeat the BFF algorithm with a different initial vertex every time. We refer to this algorithm as BFF+. As mentioned in the previous subsection, BFF+ has complexity $O(n^3)$.

Finally, notice that $H_k$ and $J_k$ with $k=1$ are the singleton graph $K_1=(V,E)$, where $V=\{v_1\}$ and $E=\emptyset$. Naturally, $b(K_1) = 1$. Since BFF returns a burning sequence of length at most $3\cdot b(K_1)-2=3\cdot 1-2=1$, it has no other choice than to return the optimal burning sequence. The tight example $K_1$ lets us observe that the best possible approximation algorithm for the graph burning problem that returns solutions of form $3\cdot b(G)-\beta$ cannot have $\beta > 2$. Otherwise, that would imply that the optimal burning sequence of $K_1$ is not optimal, i.e., a contradiction.

\subsection{A brief empirical evaluation}
\label{sec:emp}

The goal of BFF never was to compete against much more elaborated heuristics from the literature. Since BFF can almost triple the length of an optimal solution, it seems unlikely that it may be practical at all. However, after some experimentation, we found that it tends to find near-optimal solutions. In this section, we present the results of such empirical investigation.

Tables~\ref{tab:2} and \ref{tab:3} show a brief comparison among BFF, BFF+, the 3-approximation algorithm of Bonato and Kamali (Bon) \cite{bonato2019approximation}, the Greedy Algorithm with Forward-Looking Search Strategy (GFSS) \cite{vsimon2019heuristics}, and the Component-Based Recursive heuristic (CBR) \cite{gautam2021faster}. These two last heuristics are among the most efficient to date. The length of the burning sequences found by the Bon, GFSS, and CBR algorithms was taken from the literature \cite{gautam2021faster}.

Tables~\ref{tab:2} and \ref{tab:3} show the length of the burning sequences returned by the tested algorithms. The used instances are real-world graphs from the \textit{network repository} \cite{nr} and the Stanford large network dataset collection (SNAP) \cite{leskovec2016snap}. The BFF and BFF+ algorithms were implemented in C++. These implementations were executed on an Asus laptop with an Intel Core i5-8300H processor of 2.3 GHz, 8 GB of RAM memory, and Ubuntu 21.04 operating system. They can be consulted from \url{https://github.com/alex-cornejo/bff_alg}. 

\begin{table}[h]
	\centering
	\caption{Burning sequence length returned by the tested algorithms over some graphs from the \textit{network repository}. Best solutions are highlighted.}
	\label{tab:2}       
	\resizebox{11cm}{!}{
	\begin{tabular}{ccc|ccccc}
		\hline
		&&& \multicolumn{5}{c}{burning sequence length} \\
		\cline{4-8}
		$G=(V,E)$  &  $|V|$  &  $|E|$  &  Bon  & GFSS & CBR & BFF & BFF+ \\
		\hline
		ca-netscience & 379 & 914 & 12 & \textcolor{red}{\textbf{7}} & \textcolor{red}{\textbf{7}} & 8 & 8 \\
		web-polblogs & 643 & 2,280 & 9 & \textcolor{red}{\textbf{6}} & \textcolor{red}{\textbf{6}} & 8 & \textcolor{red}{\textbf{6}} \\		
		socfb-Reed98 & 962 & 18,812 & 6 & \textcolor{red}{\textbf{4}} & \textcolor{red}{\textbf{4}} & 5 & \textcolor{red}{\textbf{4}} \\			
		econ-mahindas & 1,258 & 7,513 & 9 & \textcolor{red}{\textbf{5}} & \textcolor{red}{\textbf{5}} & 6 & \textcolor{red}{\textbf{5}} \\			
		cite-DBLP & 12,591 & 49,743 & 120 & \textcolor{red}{\textbf{41}} & \textcolor{red}{\textbf{41}} & \textcolor{red}{\textbf{41}} & \textcolor{red}{\textbf{41}} \\	
		\hline
		\noalign{\smallskip}
	\end{tabular}
}
\end{table}

\begin{table}[h]
	\centering
	\caption{Burning sequence length returned by the tested algorithms over some graphs from the SNAP dataset. Best solutions are highlighted.}
	\label{tab:3}       
	\resizebox{11cm}{!}{
	\begin{tabular}{ccc|ccccc}
		\hline
		&&& \multicolumn{5}{c}{burning sequence length} \\
		\cline{4-8}
		$G=(V,E)$  &  $|V|$  &  $|E|$  &  Bon  & GFSS & CBR & BFF & BFF+ \\
		\hline
		chameleon & 2,277 & 31,421 & 9 & \textcolor{red}{\textbf{6}} & \textcolor{red}{\textbf{6}} & 8 & \textcolor{red}{\textbf{6}} \\	
		TVshow & 3,892 & 17,262 & 18 & \textcolor{red}{\textbf{10}} & \textcolor{red}{\textbf{10}} & 11 & \textcolor{red}{\textbf{10}}\\
		ego-facebook & 4,039 & 88,234 & 9 & \textcolor{red}{\textbf{4}} & \textcolor{red}{\textbf{4}} & 6 & 5\\
		squirrel & 5,201 & 198,493 & 9 & \textcolor{red}{\textbf{6}} & \textcolor{red}{\textbf{6}} & 7 & \textcolor{red}{\textbf{6}} \\
		politician & 5,908 & 41,729 & 12 & \textcolor{red}{\textbf{7}} & \textcolor{red}{\textbf{7}} & 9 & \textcolor{red}{\textbf{7}}\\
		government & 7,057 & 89,455 & 9 & \textcolor{red}{\textbf{6}} & \textcolor{red}{\textbf{6}} & 7 & \textcolor{red}{\textbf{6}} \\
		crocodile & 11,631 & 170,918 & 12 & \textcolor{red}{\textbf{6}} & \textcolor{red}{\textbf{6}} & 8 & \textcolor{red}{\textbf{6}}\\	
		\hline
		\noalign{\smallskip}
	\end{tabular}
}
\end{table}

Tables~\ref{tab:2} and \ref{tab:3} show that BFF and BFF+ returned better solutions than Bon in all cases. Besides, in ten out of twelve instances, BFF+ found solutions of the same length as those found by the GFSS and CBR heuristics. Finally, the $O(n^3)$ complexity of BFF and BFF+ might be considered impractical for large instances. Nevertheless, the execution time of our BFF (BFF+) implementation is \textasciitilde10ms (\textasciitilde20ms) over the smallest instance ca-netscience (379 vertices) and \textasciitilde20s (\textasciitilde38s) over the biggest instance cite-DBLP (12,591 vertices). These execution times consider the time consumed by computing the all-pairs shortest path.

\section{Conclusions}
\label{sec:con}
The graph burning problem is an NP-hard problem that models different phenomena, such as social contagion, the sequential spread of information, or viral infections under a very idealistic context \cite{bonato2014burning,vsimon2019heuristics,gupta2021burning}. This problem's primary attribute is that it helps quantify how vulnerable a graph is to contagion. This problem receives as input a simple graph $G=(V,E)$ and seeks a burning sequence of minimum length. The length of this sequence is known as the burning number $b(G)$. Intuitively, a shorter $b(G)$ implies that $G$ is more vulnerable to contagion. Due to its nature, the graph burning problem has been approached through approximation algorithms and heuristics \cite{bonato2019approximation,vsimon2019heuristics,gautam2021faster}.

This paper introduces two simple approximation algorithms for the graph burning problem (Algorithms~\ref{alg:1} and \ref{alg:2}). The approximation factor of both is $3 -2/b(G)$. Namely, given a simple graph $G=(V,E)$, they generate burning sequences of length at most $3 \cdot b(G) -2$. Actually, the goal of Algorithm~\ref{alg:1} is to smooth the way for understanding Algorithm~\ref{alg:2}, which represents the main contribution of this paper. We refer to Algorithm~\ref{alg:2} as BFF. Both algorithms require an initialization that consists of finding the shortest path between every pair of vertices. These distances may be computed by some polynomial-time shortest path algorithm, such as a Breadth-First Search-based algorithm \cite{kleinberg2006algorithm}. Considering this initialization, the overall complexity of Algorithms~\ref{alg:1} and \ref{alg:2} is $O(n^3)$. Besides, the BFF algorithm can be repeated $n$ times to generate better solutions (BFF+ algorithm). The overall complexity of BFF+ is $O(n^3)$ too. 

The BFF and BFF+ algorithms were implemented and executed over a small set of instances from the literature. The obtained results show that these algorithms generate near-optimal burning sequences. Finally, future work includes exploring better approximation algorithms and how the graph burning problem relates to other NP-hard center selection problems.


\end{document}